\documentclass{llncs}
\usepackage{graphicx} 
\usepackage{comment}
\usepackage{amssymb}
\usepackage{amsmath}
\usepackage[ruled,vlined,linesnumbered,noresetcount]{algorithm2e}
\usepackage{caption}
\hyphenation{Smart-Shards}

\title{SmartShards: Churn-Tolerant Continuously Available Distributed Ledger}
\author{Joseph Oglio \and Mikhail Nesterenko \and Gokarna Sharma} 
\institute{Department of Computer Science, Kent State University, Kent, OH 44242, USA\\
\email{\{joglio@,mikhail@cs.,sharma@cs.\}@kent.edu}}

\pagestyle{plain}


\begin{document}
\maketitle
\thispagestyle{plain}
\begin{abstract}

We present \emph{SmartShards}: a new sharding algorithm for improving Byzantine tolerance and churn resistance in blockchains. Our algorithm places a peer in multiple shards to create an overlap. This simplifies cross-shard communication and shard membership management. We describe \emph{SmartShards}, prove it correct and evaluate its performance.

We  propose several \emph{SmartShards} extensions: defense against a slowly adaptive adversary, combining transactions into blocks, fortification against the join/leave attack. 
\end{abstract}

\section{Introduction}
Blockchain is a distributed digital ledger maintained by a network of independent peers. The technology provides transparency and immutability of records while promises decentralized control over the ledger. This facilitates cooperation among non-trusting entities: the peers agree on the records of the ledger. The peers usually do not belong to the same organization and may join and leave the network during its operation generating continuous churn. The peers themselves may not necessarily operate correctly. Instead, they may fail or even attack the network. Such behavior is modeled as Byzantine fault~\cite{lamport1982byzantine} where the faulty peer is allowed to behave arbitrarily. 

Besides cryptocurrency~\cite{bitcoin,ethereum}, which generates a lot of recent public discourse, blockchain is used in a variety of applications: online auctions~\cite{lafourcade2020security}, marketplaces~\cite{white2022characterizing}, supply chain~\cite{dujak2019blockchain}, health care~\cite{agbo2019blockchain}, Internet-of-Things~\cite{samaniego2016blockchain}, intellectual property rights~\cite{ito2019critical}, electric power industry~\cite{andoni2019blockchain} among many others.

Popular blockchains use proof-of-work based consensus algorithms~\cite{bitcoin}
in which peers compete for the right to publish records on the blockchain by searching for solutions to cryptographic challenges. Such algorithms tend to be conceptually simple and robust. However, they are resource intensive and environmentally harmful~\cite{wendl2023environmental}. Therefore, modern blockchain designs often focus on cooperative consensus algorithms. In these cooperative consensus algorithms, 
rather than compete, peers exchange messages to arrive at a joint decision.

Within many blockchain application domains, the size of the blockchain network needs to reach thousands and possibly hundreds of thousands of peers. This necessitates systems which are highly scalable and resilient. However, traditional blockchain systems face significant scalability issues: to agree on a ledger record all peers have to communicate with each other. Hence, the communication cost grows with the network size. 

Sharding is a technique that may potentially enhance blockchain scalability. In a sharded network, the peers are divided into groups called shards. Each shard processes transactions in parallel, improving overall system throughput. A number of sharded blockchain designs are presented in the literature.
Refer to recent surveys~\cite{hafid2020scaling,liu2023survey,yu2020survey} for an extensive review. Although promising, the scalability of shading blockchain design is hampered by the inherent features of a decentralized peer-to-peer systems. In particular, cross-shard transactions and churn. 

In this paper we propose an approach of shard construction we call \emph{SmartShards} that addresses both of these problems. \emph{SmartShards}  overlaps shards to improve shard communication coordination and membership. 

\ \\
\textbf{Cross-shard transactions and faults.} A shard acts as a unit of either coordination or data ownership. For example, in Elastico~\cite{luu2016secureelastico}, the shards concurrently verify different transactions.
In \emph{Rapidchain}~\cite{zamani2018rapidchain}, each shard stores data about a collection of wallets. A cross-shard transaction requires the coordination of multiple shards. For example, in \emph{Rapidchain}, a transaction may be moving funds between wallets in two different shards. There are multiple ways to coordinate such transactions. Some blockchains use locking~\cite{kokoris2018omniledger}. Other designs break the single transaction into sub-transactions to be executed separately. Alternatively, the system may have a dedicated reference committee that approves
cross-shard transactions~\cite{luu2016secureelastico}. Regardless of the technique, the source and target shards have to communicate either directly or though intermediate coordinators.

Individual peers may be faulty. One approach to Byzantine fault-tolerance is to use cryptographic signatures. However, this may be computationally expensive or difficult to implement in a peer-to-peer system. Instead, we consider a classic approach where the correctness depends on sufficiently large majority of correct peers. 
To ensure information propagation correctness, multiple source shard peers have to broadcast the transaction data to multiple target shard peers. This limits the efficiency of cross-shard transaction processing. In \emph{SmartShards} that we propose, each peer participates in multiple shards. The shard overlap is used to make inter-shard communication more efficient and robust. 

\ \\
\textbf{Churn.}
Churn, continuous joining and leaving of peers, may disrupt the operation of a blockchain. In case of sharding, the problem is exacerbated since it may potentially lead to
shard failure and structural network compromise. However, the problem of handling continuous churn within sharded blockchains remains underexplored. 

In most systems~\cite{kokoris2018omniledger,luu2016secureelastico,zamani2018rapidchain}, the churn is handled through periodic system re-configuration events. An epoch is the time of continuous system operation between such reconfiguration events. For such epoch-based systems, the rate of churn is considered low enough to ensure adequate system availability. If the churn is substantial, to ensure correct blockchain operation, the system has to spend significant time re-configuring itself. This downtime increases with both the rate of churn and the scale of the system. 

In a sharded blockchain, peers have to maintain both their shard membership and the links to the other shards for cross-shard transactions. In traditional systems, this information has to be maintained separately. In \emph{SmartShards}, the peers are used for internal shard recording and for inter-shard communication. Therefore, the overlapping shards integrally maintain both membership and links. This simplifies churn handling in \emph{SmartShards}. 



\ \\
\noindent
\textbf{Related work.} 
There are several blockchain designs that overlap coordination between consensus groups. \emph{Monoxide}~\cite{wang2019monoxide} is a competitive Proof-of-Work blockchain. 
To improve scalability, it uses multiple asynchronous consensus zones that maintain separate blockchains. These blockchains periodically synchronize by linking blocks across 
zones. A peer may work on any of these local blockchains. Thus, \emph{Monoxide} has fixed zones with floating peer membership. \emph{Hyperledger Fabric}~\cite{androulaki2018hyperledger} uses cooperative consensus. It also maintains multiple blockchains. It has a centralized ordering service for transaction synchronization. Peers may join and leave the network only when the system is off-line, i.e. outside an epoch. 

Churn management in distributed systems has been extensively studied.
Kuhn {\em et al.}~\cite{kuhn2010towards} propose simulating a robust node by a collection of peers connected to peers of other simulated nodes. These simulated nodes retain there data despite individual peer churn. In this sense, these simulated nodes are similar to shards in blockchain networks. 
Foreback {\em et al.}~\cite{foreback2016infinite,foreback2019churn} explore the theoretical limits of infinite churn, proposing techniques to maintain network connectivity despite continuous churn in general peer-to-peer networks.

Malkhi and Reiter~\cite{malkhi1998byzantine} study Byzatine-robust quorum systems for distributed storage. A quorum is a collection of subsets of peers where each pair of subsets intersects. This arrangement is similar to \emph{SmartShards}. However, in their approach, quorums are used to 
build a robust register and to synchronize the entire network and ensure consistent reading and writing of the shared data. In \emph{SmartShards}, the quorums are used to maintain separate data in separate shards and only synchronize if the data is affected by smart-shard transactions.

\emph{SMARTCHAIN}~\cite{bessani2020byzantine} is a blockchain algorithm that uses quorums to improve performance. However, \emph{SMARTCHAIN} uses cryptographic signatures to verify cross-shard transactions. In contrast, \emph{SmartShards} does not need the cryptographic signatures. 

Baldoni {\em et al.}~\cite{baldoni2013protocol} use quorums for churn resistance in Byzantine fault-tolerant distributed storage. Their approach requires bounded churn and eventual system synchrony. \emph{SmartShards} do not need such assumptions.

García-Pérez {\em et al.}~\cite{garcia2019federated} use overlapping quorums to achieve consensus despite Byzantine faults. Similar to \emph{SmartShards}, the quorum intersections are used for data update synchronization. However, a ``trust relationship'' is required for quorum formation and maintenance. \emph{SmartShards} do not need such trust relationship. 

\ \\
\textbf{Our contribution.}
To summarize, \emph{SmartShards} presented in this paper is a novel sharded approach to Byzantine tolerance and churn resistance in blockchain systems. 

\section{Notation and Problem Definition}

\textbf{System model.} We assume a peer-to-peer network. Each peer has a unique identifier. A peer may send a message to any
other peer so long as it has the receiver's identifier. Communication is reliable.
Peers communicate through authenticated channels: the receiver of the message may always identify the sender. 

\begin{figure}
\captionsetup{width=0.47\textwidth}
\begin{tabular}{c c}
    \centering
    \begin{minipage}[t]{0.45\textwidth}
        \centering
        \includegraphics[width=0.60\textwidth]{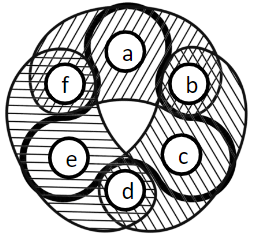}
        \caption{\emph{SmartShards} peer to shard allocation example. The shards are: $\{f, a , b\}$, $\{b,c,d\}$, $\{d, e, f\}$,  $\{a, c, e\}$.}
        \label{smartshards-diagram}
    \end{minipage}
    &
    \begin{minipage}[t]{0.55\textwidth}
        \centering
        \includegraphics[width=0.7\textwidth]{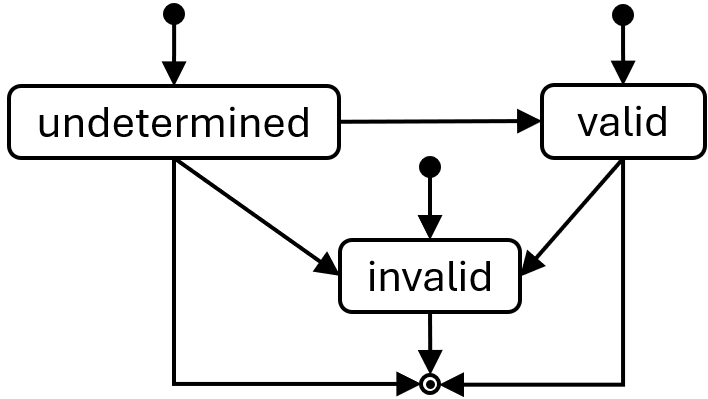}
        \caption{Transaction validity transitions.}
        \label{validity-diagram}
    \end{minipage}
\end{tabular}
\end{figure}

\noindent
\textbf{Shards, transactions, faults.}
A \emph{shard} is a fully connected set of peers maintaining a blockchain. 
Shards $A$ and $B$ are considered \emph{overlapping} if there exists at least one peer $p \in A \cap B$. For simplicity, 
we initially assume that a peer may belong to at most two shards. See example peer to shard allocation in Figure~\ref{smartshards-diagram}.

The overlaps are used for communication between shards. We assume that the overlap size, $x$, is the same for all shards. 

Two peers $p$ and $q$ that belong to the same shard are \emph{shard mates}, or just \emph{mates}. If $q$ is a mate of $p$ in some shard, then the \emph{countershard} of $p$ for $q$  is the other shard to which $q$ belongs. 

A \emph{wallet} is a means of storing funds. Each wallet has a unique identifier. A \emph{client} is an entity that \emph{owns} wallets. This ownership can be authenticated by the peers of the network. There may be multiple clients. A client submits transactions. A \emph{transaction} is a transfer of funds from a \emph{source} to a \emph{target} wallet. 
For simplicity, we assume that each transaction has a single source and a single target wallet. A shard records transactions for a set of wallets.
A \emph{source shard} records the transaction if the source wallet is in this shard's set. Similarly, a \emph{target shard} records a transaction if it maintains the target wallet. 
A client may have wallets in more than one shard. 

There are two types of transactions. A \emph{cross-shard transaction} moves funds between wallets of different shards.
An \emph{internal transaction} moves funds between two wallets of the same shard. Note that for an internal transaction, the single shard is both the source and the target. 

The funds are transferred in the form of \emph{UTXOs} (unspent transaction outputs). A transaction is applied to an \emph{input} UTXO and produces an \emph{output} UTXO.

A ledger is a sequence of transactions recorded by a shard. Each peer in the shard maintains its own copy of this ledger. The peers are initialized with the same genesis ledger.

A transaction is \emph{confirmed} by a  peer if it is recorded in this peer's ledger. An output UTXO of a confirmed transaction is either spent or unspent.  A UTXO is \emph{unspent} if it is an output of a confirmed transaction and there is no other confirmed transaction that uses it as an input. It is \emph{spent} if used as an input to a confirmed transaction. If a transaction is unconfirmed, its output UTXO is neither spent nor unspent. 

A transaction, whether confirmed or not, is \emph{valid} if it is applied to an unspent UTXO. A transaction is \emph{invalid} if it is applied to a spent UTXO. Note that if the input of a transaction is neither spent nor unspent, then its validity is undetermined. 
The validity transitions of a transaction are, therefore, as follows (see Figure \ref{validity-diagram}). A submitted transaction may be either valid, invalid, or undetermined. 
An undetermined transaction may eventually  become either valid or invalid. A transaction of determined validity may never become undetermined again. 
A valid transaction may become invalid if another transaction that spends its input UTXO is confirmed. A valid confirmed transaction is never invalidated. An invalid transaction, either confirmed or unconfirmed, never becomes valid.

A \emph{Byzantine} peer is faulty. A faulty peer behaves arbitrarily. A \emph{correct} peer is not faulty.

\ \\
\textbf{Consensus and Distributed Ledger Problems.}

\begin{definition}
In \emph{the Consensus Problem}, 
every correct process is input a value $v$ and must output an irrevocable decision subject to the following properties: 
\begin{description}
\item[CValidity:] if all correct peers are input the same value $v$, then every correct peer decides $v$; 
\item[CAgreement:] no two correct peers decide differently; 
\item[COrder:] any pair of decisions are ordered the same of every pair of peers; 
\item[CLiveness:] every correct peer eventually decides.
\end{description}
\end{definition}

\emph{Tolerance threshold} $f$ is the maximum number of faulty processes. 
An algorithm solves the Consensus Problem if it satisfies the three properties above provided that the number of faults does not exceed $f$.

\begin{definition} In \emph{the Distributed Ledger Problem}, peers confirm a particular transaction with the following three properties:

\begin{description}
\item[LValidity:] every confirmed transaction is valid; 
\item[LAgreement:] if a transaction is confirmed by one correct peer in a shard, it is confirmed by every correct peer in the shard; and
\item[LLiveness:] if a client submits a valid transaction, it is eventually either confirmed or invalidated.
\end{description}

\end{definition}



\begin{figure}[htb]
\fontsize{8pt}{9pt}\selectfont
\input{variables}
\end{figure}

\begin{figure}[htb]
\fontsize{8pt}{9pt}\selectfont
\input{algorithm}
\end{figure}

\begin{figure}[htb]
\fontsize{8pt}{9pt}\selectfont
\input{algorithm3}
\end{figure}

\section{\emph{SmartShards} Description and Correctness Proof}

\textbf{Algorithm outline.} \emph{SmartShards} uses a consensus algorithm \emph{Consensus} to confirm transactions. The peers of the source shard execute \emph{Consensus} to agree on the transaction. This is sufficient for an internal transaction where the source and target shards are the same. For a cross-shard transaction, the peers that overlap the source and target shard, inform the target shard of the transaction via a \textsc{Transfer}  message. Once a peer of the target shard receives sufficient number of \textsc{Transfer} messages, it initiates consensus on the cross-shard transaction. 

\ \\
\textbf{Constants and variables description.}
The constants and variables are listed in Algorithm~\ref{algVars}. Let us start with constants. 
Each peer has a unique identifier $p$. The peer stores a set of all shard ids in $Shards$. Out of this set, $p$ belongs to two shards: $shard_1$ and $shard_2$. The set $Peers_1$ contains the ids of the mates of $p$ in $shard_1$. Similarly, $Peers_2$ are mates of $p$ in $shard_2$.  
For each mate $q$ of $p$,  $CounterShards_1$ lists a set of countershards for all mates of $p$ in $shard_1$. $Countershards_2$ is defined similarly for $shard_2$. 

The variables maintained by $p$ are of two kinds: transaction and process variables. Transaction variables pertain to a particular transaction: client id, source and target wallet, source UTXO, and source and target shard. Note that if the transaction is internal, then the source and target shards are the same. 

In addition to transaction variables, $p$ maintains process variables in all transactions. Specifically, $Ledger_1$ contains records of confirmed transactions for $shard_1$. Variable $Ledger_2$ keeps the records for $shard_2$. Variable $Transfers_1$ collects notifications of confirmed transactions from the source shard peers.

\ \\
\textbf{Actions description.} \emph{SmartShard} actions are listed in Algorithm~\ref{algActions}.
The function \textsc{Valid} is used in the actions of \emph{SmartShards}. It examines the shard ledger of $p$ to check if the specific $UTXO$ is in the client wallet and is unspent. That is, this $UTXO$ is not an input to another transaction in this ledger. 

There are four actions in the algorithm: \textsc{Transaction}, \textsc{EndConsensusTx}, \textsc{EndConsensusTr}, and \textsc{Transfer}. 
The \textsc{Transaction} action, see Line~\ref{transaction}, processes client transaction receipts in the source shard. If a transaction is valid, the peer starts the consensus algorithm for this transaction.

Once this consensus is done, the \textsc{EndConsensusTx} action, see Line~\ref{endConsensusTx}, at the source shard checks if the transaction is still valid and confirms the transaction by recording it in the local ledger and sends a confirmation message to the client. If the transaction is cross-shard, and the peer is in both  source and target shards, then the peer sends a \textsc{Transfer} message to the target shard notifying it of the successful confirmation. 

The \textsc{Transfer} action handles the \textsc{Transfer} message receipt, see Line~\ref{transfer}. A recipient peer $p$ in $shard_1$ collects such receipts in $Transfers_1$.
If $p$ receives \textsc{Transfer} messages from greater than half of the peers in the overlap between the source and target shards, it considers the transaction to be valid and
then $p$ initiates consensus in the target shard on the cross-shard transaction. Every peer in the target shard participates in this consensus. Each peer uses the number of received \textsc{Transfer} messages to set its initial value for consensus. If the number of received \textsc{Transfer} messages is greater than half the number of peers in the overlap then the value is set to \textbf{true} otherwise \textbf{false}.

Once consensus on a transfer is reached, the \textsc{EndConsensusTr} action, see Line~\ref{endConsensusTr}, at the target shard checks whether the peers agree on the validity of the transaction. If agreement is reached, and $result$, the consensus value, is \textbf{true}, the transaction is recorded in the ledger and sends a confirmation message to the client.

\ \\
\textbf{Correctness proof.}

\begin{lemma}\label{lemLValidity}
\emph{SmartShards} satisfies the LValidity property of the Distributed Ledger Problem. 
\end{lemma}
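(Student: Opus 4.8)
The plan is to trace every point at which a correct peer appends a record to its ledger and to show that the appended transaction is always applied to an unspent UTXO. Inspection of Algorithm~\ref{algActions} reveals exactly two such points: the \textsc{EndConsensusTx} action (Line~\ref{endConsensusTx}) at the source shard, and the \textsc{EndConsensusTr} action (Line~\ref{endConsensusTr}) at the target shard. I would treat these cases separately, since the source shard verifies validity directly whereas the target shard must rely on the \textsc{Transfer} mechanism to vouch for the input UTXO, which it does not itself store.

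For the source-shard case the argument is immediate. Before appending in \textsc{EndConsensusTx}, a correct peer re-evaluates \textsc{Valid}$(client, \mathit{UTXO}, sShard)$, which by definition returns \textbf{true} only when the input UTXO is owned by the client and is unspent in that peer's ledger. Hence a transaction recorded through this action is, by construction, applied to an unspent UTXO and is therefore valid. This disposes of every internal transaction and of the source-shard copy of every cross-shard transaction in one stroke.

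The target-shard case is the crux. Here a correct peer appends only when the consensus outcome $result$ is \textbf{true}, and the validity of the input UTXO is never re-checked locally. I would argue the contrapositive. Suppose the transaction is invalid, i.e., its input UTXO is already spent. A correct overlap peer emits a \textsc{Transfer} message only from inside \textsc{EndConsensusTx}, and only after \textsc{Valid} succeeds; for a spent UTXO this check fails at every correct overlap peer, so no correct overlap peer ever sends a \textsc{Transfer} for this transaction. Consequently the only \textsc{Transfer} messages any target peer can receive originate from the at most $f$ Byzantine peers of the overlap. Under the tolerance assumption that each overlap holds a correct majority, the activation count $\lceil x/2 \rceil$ of Line~\ref{transfer} exceeds $f$, so no correct target peer reaches the threshold and every correct target peer enters consensus with initial value \textbf{false}. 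By CValidity every correct peer then decides \textbf{false}, forcing $result$ to be \textbf{false}, so no correct target peer appends the record, contradicting the assumption that the transaction was confirmed. Thus a confirmed cross-shard transaction is valid as well, and combining the two cases yields LValidity.

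The main obstacle is precisely this last step, because CValidity is only the weak guarantee of agreement on a unanimous input: a decision of \textbf{true} is never pinned down by the consensus layer when correct peers disagree on their inputs. The proof therefore cannot reason forward from ``$result$ is \textbf{true}''; it must instead rule out a \textbf{true} decision for an invalid transaction by showing that the \textsc{Transfer} threshold forces all correct inputs to be \textbf{false}, and this is exactly where the correct-majority-per-overlap assumption is consumed. A secondary point I would pin down is that ``the input UTXO is spent'' is perceived uniformly by every correct overlap peer rather than by just one; I would support this by noting that spentness is determined entirely by confirmed transactions and by appealing to agreement within the source shard, so that the failing \textsc{Valid} check is not merely local to a single peer.
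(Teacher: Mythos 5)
Your proof is correct and follows essentially the same route as the paper's: the source-shard case via the \textsc{Valid} re-check in \textsc{EndConsensusTx}, and the target-shard case by showing that for a non-valid transaction no correct overlap peer sends \textsc{Transfer}, so the threshold cannot be met by the at most $\lfloor x/2-1 \rfloor$ Byzantine overlap peers, all correct target peers input \textbf{false}, and CValidity forces a \textbf{false} decision. The only differences are presentational---you argue by contraposition where the paper splits cases on whether a correct or a Byzantine peer initiated the target-shard consensus, and your closing observation that source-shard agreement makes the failing \textsc{Valid} check uniform across correct overlap peers is a point the paper leaves implicit.
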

\begin{proof} 

To satisfy LValidity, we need to show that only valid transactions are confirmed.
Let us consider internal transactions first. In this case, the source and target shards are the same. Therefore, \emph{SmartShards} executes only the \textsc{Transaction} and \textsc{EndConsensusTx} actions without transmitting any \textsc{Transfer} messages. The peers of the shard execute the \emph{Consensus} algorithm for the transaction within a single shard.  An internal transaction is only confirmed by the \textsc{EndConsensusTx} action.  
Before confirming, the action checks the transaction validity. That is, only valid transactions are confirmed. Valid confirmed transactions may not be invalidated. 

Let us now deal with cross-shard transactions. Such transactions are confirmed in the source and target shards separately.  By an argument similar to the above, we can show that a cross-shard transaction is confirmed in the source shard only if it is valid. If such a transaction is confirmed in the source shard, the source shard peers send a \textsc{Transfer} message. This initiates consensus in the target shard. A correct peer starts such consensus only after it receives a \textsc{Transfer} message from more than half of the peers in the overlap between the source and target shards. This means that at least one correct peer in the source shard sends such message. That is, this transaction is valid.

Note that a cross-shard transaction is confirmed by the \textsc{EndConsensusTr} action in the target shard. If this consensus is started by a correct peer, the transaction is valid. 
Alternatively, a Byzantine peer may start \emph{Consensus} on a potentially invalid transaction. For such a transaction, correct peers do not have a sufficient number of \textsc{Transfer} messages. Therefore, all correct peers are input an initial value of \textbf{false}. By the CValidity property of consensus, if all correct peers are input \textbf{false}, they must decide \textbf{false}. In which case, the transaction is not confirmed. 
Thus, \emph{SmartShards} satisfies LValidity for both internal and cross-shard transactions.
\qed
\end{proof}

\begin{lemma}\label{lemLAgreement}
\emph{SmartShards} satisfies the LAgreement property of the Distributed Ledger Problem. 
\end{lemma}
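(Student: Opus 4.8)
The plan is to reduce LAgreement to the agreement guarantees of the underlying \emph{Consensus} algorithm, the key subtlety being the independent validity re-check performed in \textsc{EndConsensusTx}. Confirmation occurs in exactly two places: \textsc{EndConsensusTx} (for internal transactions and the source side of cross-shard transactions) and \textsc{EndConsensusTr} (for the target side of cross-shard transactions). Both actions fire only after an instance of \emph{Consensus} terminates. I would first dispatch \textsc{EndConsensusTr} as the easy case: a peer appends the transaction iff the decided value $result$ is \textbf{true}, and by CAgreement every correct peer in the target shard decides the same $result$. Hence either all correct peers append or none do.

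The harder case is \textsc{EndConsensusTx}, because a peer appends only if the decision is reached \emph{and} its local call to \textsc{Valid} returns \textbf{true}; since \textsc{Valid} inspects the peer's own ledger, two correct peers could in principle disagree on validity if their ledgers had diverged. I would therefore prove, as the central lemma, that all correct peers in a shard hold identical ledgers at all times. This follows by induction on the sequence of append operations. The base case is the common genesis ledger. For the inductive step, suppose the ledgers of all correct peers coincide up to the $k$-th appended record. The $(k+1)$-th record is produced by some instance of \emph{Consensus}: by CAgreement all correct peers decide the same transaction, and by COrder they place it at the same position in the ledger. If the append is gated by a \textsc{Valid} check, that check is a deterministic function of the (identical) current ledger, so it returns the same verdict for every correct peer, and they all append or all skip in lockstep. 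Thus the ledgers remain identical after the $(k+1)$-th step.

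With the identical-ledger invariant in hand, the remainder is immediate. For any transaction confirmed by a correct peer via \textsc{EndConsensusTx}, CAgreement guarantees all correct peers reach the same decision, and the shared ledger guarantees the \textsc{Valid} check agrees across peers, so every correct peer in the shard confirms the transaction. Combined with the \textsc{EndConsensusTr} case, this establishes LAgreement for both internal and cross-shard transactions.

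The main obstacle is the inductive invariant that correct peers' ledgers never diverge. The delicate point is the interaction between the post-consensus \textsc{Valid} re-check and the COrder property: one must argue that consensus agrees not only on \emph{which} transactions to confirm but also on their total order, so that the ledger state seen by \textsc{Valid} is truly identical across peers at each step. I would lean on COrder precisely here, and I would also need to confirm that no append can occur outside these two consensus-gated actions, so that the induction genuinely covers every modification to the ledger.
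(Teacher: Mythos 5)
Your proposal is correct and matches the paper's own argument: the paper likewise proves LAgreement by induction from the common genesis ledger, using CAgreement and COrder to ensure every correct peer confirms the same transaction at the same position, and using the identical-ledger hypothesis to ensure the \textsc{Valid} re-check yields the same verdict at every correct peer. Your explicit separation of the \textsc{EndConsensusTr} case and the emphasis on \textsc{Valid} being a deterministic function of the ledger are just a more detailed rendering of the same induction.
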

\begin{proof} 
We prove this property by induction. Each peer starts with the same genesis ledger which is assumed to contain valid transactions. Let the ledger at each peer contain $i$ valid transactions.  Any transaction is confirmed by either the execution of the \textsc{EndConsensusTx} or \textsc{EndConsensusTr} action. Let us consider the action for transaction $i+1$.   By COrder property,
the order of \textsc{EndConsensusTx} and \textsc{EndConsensusTr} actions are the same for every peer. By CAgreement, this is the same transaction at every peer. Since the ledger up to the $i$-th transaction is the same for each peer, the validity of this transaction is also going to be the same. That is, this transaction is going to be either confirmed or rejected by all correct peers. Thus, the ledger remains consistent after processing $i+1$ transactions. 
\qed
\end{proof}

\begin{lemma}\label{lemLLiveness}
\emph{SmartShards} satisfies the LLiveness property of the Distributed Ledger Problem. 
\end{lemma}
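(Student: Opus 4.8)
The plan is to prove LLiveness by tracing a valid submitted transaction through the algorithm and showing it must reach a determined state (confirmed or invalidated) in finite time, relying on the CLiveness property of the underlying \emph{Consensus} primitive at each consensus invocation. I would split the argument into the two transaction types, as in the previous lemmas. First I would handle the internal transaction, which is the simpler case and establishes the basic mechanism. A client repeatedly resends the \textsc{Transaction} message until it is confirmed (the \textsc{SubmitTx} loop in Algorithm~\ref{algClient}), so by reliable communication every correct source-shard peer eventually receives it. Each such peer runs \textsc{Valid} and, if the transaction is valid, invokes \textsc{StartConsensusTx}. By CLiveness, this consensus terminates, triggering \textsc{EndConsensusTx} at every correct peer. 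At that point the transaction is either still valid and gets confirmed, or some conflicting transaction has been confirmed in the meantime, rendering it invalid. Either way the transaction's validity is determined, which is exactly what LLiveness requires.

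Next I would treat the cross-shard transaction, which is the crux of the lemma. Here the funds transfer proceeds in two stages. The source-shard stage is identical to the internal case: by CLiveness the source consensus terminates and, if the transaction is valid, every correct overlap peer (a peer in both source and target shards) reaches \textsc{EndConsensusTx} and sends a \textsc{Transfer} message to the target shard. The key counting step is that the number of correct overlap peers exceeds half the overlap size, so the target-shard peers will accumulate more than the required $\lceil \cdot / 2 \rceil$ threshold of \textsc{Transfer} receipts. Once a correct target peer crosses this threshold it calls \textsc{StartConsensusTr}, and by CLiveness that second consensus terminates, driving \textsc{EndConsensusTr} at every correct target peer and thereby determining the transaction's fate in the target shard.

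The main obstacle I anticipate is establishing liveness in the face of an adversary who never lets the source transaction become valid, or who keeps the transaction perpetually undetermined. Specifically, LLiveness only promises an outcome for a transaction that is \emph{valid} when submitted, but the definitions earlier in the excerpt allow a UTXO to be in the undetermined state (neither spent nor unspent) because a causally preceding transaction has not yet been confirmed. I would need to argue that the system does not deadlock in such a situation — either by appealing to the client's persistent resending so that the peer eventually sees the input UTXO resolve to spent or unspent, or by observing that the chain of causally preceding transactions is finite and each link is itself resolved by a terminating consensus, so an induction on causal depth bounds the total time. A secondary subtlety worth checking is that a Byzantine peer cannot stall progress: it can start spurious \textsc{StartConsensusTr} instances, but by CLiveness each terminates (with \textbf{false}, per Lemma~\ref{lemLValidity}), and it cannot prevent the correct overlap peers from delivering the genuine \textsc{Transfer} messages that push a correct target peer over the threshold. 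Assembling these pieces yields that every valid submitted transaction is eventually confirmed or invalidated.
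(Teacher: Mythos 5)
Your overall route matches the paper's: handle internal transactions via CLiveness of \emph{Consensus}, then for cross-shard transactions argue that the source consensus terminates, that the correct overlap peers (more than half the overlap, by the fault bound) deliver enough \textsc{Transfer} messages to push every correct target peer over the $\lceil \cdot /2 \rceil$ threshold, and that the target-shard consensus then runs. However, there is a genuine gap in your final step. You conclude by invoking CLiveness for the second consensus: it ``terminates, driving \textsc{EndConsensusTr} at every correct target peer and thereby determining the transaction's fate.'' Termination alone does not determine the fate in the sense LLiveness requires. The \textsc{EndConsensusTr} action records the transaction and notifies the client only when the consensus $result$ is \textbf{true}; if the consensus terminates with \textbf{false}, the transaction is appended to no target ledger, no \textsc{TxConfirmation} is ever sent, and the transaction is also never invalidated (no other transaction spends its input). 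The client then resends forever, the source-shard peers see the UTXO already spent in their own ledger and do nothing, and the transaction sits in limbo --- LLiveness fails. So a \textbf{false} decision must be ruled out, and CLiveness cannot do that.

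The paper closes exactly this hole with CValidity: since every correct target peer eventually receives \textsc{Transfer} messages from more than half of the overlap, every correct peer inputs \textbf{true} to the consensus, and CValidity then forces the decision to be \textbf{true}, so the transaction is actually confirmed at the target. Notice that you already established the needed premise (your ``key counting step'' shows all correct target peers cross the threshold); you simply drew the conclusion from the wrong property. Replacing the appeal to CLiveness by an appeal to CValidity in that last step repairs the argument. Separately, your ``main obstacle'' paragraph about undetermined inputs and induction on causal depth is unnecessary: LLiveness only makes a promise for transactions that are valid when submitted, which is exactly the hypothesis both you and the paper work under, so no causal-chain argument is needed.
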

\begin{proof}
For internal transactions, the peers are running \emph{Consensus}. Therefore, LLiveness is satisfied because the CLivenss property of \emph{Consensus} ensures that every transaction is eventually either confirmed or invalidated. 

Let us discuss cross-shard transactions. Similarly to internal transactions, the peers of the source shard are running \emph{Consensus}. Therefore, they eventually either confirm or invalidate any cross-shard transaction. If the transaction is confirmed by the source shard, the source shard peers that overlap the target shard send \textsc{Transfer} messages.
Greater than half of such peers are assumed to be correct. Therefore, each target peer eventually gets more than half such \textsc{Transfer} messages. 

If a target peer gets more than half \textsc{Transfer} messages, it will start \emph{Consensus}. If all correct peers received \textsc{Transfer} messages from more than half of the overlap, then they all use \emph{true} as the input for \emph{Consensus}. By CValidity, this consensus must succeed. Therefore, LLiveness holds for cross-shard transactions as well. 
Hence the lemma. 
\qed
\end{proof}

The below theorem follows from Lemmas~\ref{lemLValidity},~\ref{lemLAgreement}, and~\ref{lemLLiveness}. Recall that $x$ is the shard overlap size.

\begin{theorem} \label{thrmSmartShards}
Algorithm \emph{SmartShards} solves the Distributed Ledger Problem with at most $f$ Byzantine faults in each shard and at most $\lfloor x/2 -1 \rfloor$ faults in the overlap between shards. 
\end{theorem}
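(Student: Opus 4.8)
The plan is to assemble the three lemmas into the full Distributed Ledger guarantee and then pin down precisely which fault hypotheses each lemma's argument silently relied on. First I would observe that Lemmas~\ref{lemLValidity}, \ref{lemLAgreement}, and~\ref{lemLLiveness} establish LValidity, LAgreement, and LLiveness respectively, and that these three properties are by definition exactly what it means to solve the Distributed Ledger Problem. So the theorem reduces to checking that the two stated fault bounds are simultaneously sufficient to keep all three lemmas in force.

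Next I would extract the per-shard bound. Every invocation of \emph{Consensus} in the three lemmas appeals to CValidity, CAgreement, COrder, or CLiveness, and by the definition of the tolerance threshold these hold only when a shard contains at most $f$ faulty peers. Internal transactions (confirmed through \textsc{EndConsensusTx}), as well as the source-shard and target-shard phases of a cross-shard transaction (confirmed through \textsc{EndConsensusTx} and \textsc{EndConsensusTr}), each run \emph{Consensus} inside a single shard. Hence requiring at most $f$ faults per shard is exactly what legitimizes every \emph{Consensus} call cited in the lemmas.

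Then I would isolate the overlap bound, which is the genuinely new ingredient. The cross-shard mechanism hinges on the \textsc{Transfer} threshold of more than half the overlap, and two opposing requirements must hold at once. For LLiveness the correct overlap peers must by themselves exceed the threshold, so with $b$ faults in an overlap of size $x$ we need $x - b > x/2$, i.e.\ the correct peers form a strict majority. For LValidity the Byzantine overlap peers must be unable to forge enough \textsc{Transfer} messages to reach the threshold on their own, so we need $b \le x/2$. I would verify that bounding the overlap faults by $b \le \lfloor x/2 - 1\rfloor$ suffices to satisfy both inequalities, so that the liveness and validity arguments of the cross-shard case both go through; the bound is deliberately the conservative choice that works for every parity of $x$.

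The main obstacle I anticipate is the interaction between the two fault budgets rather than either bound in isolation. An overlap peer belongs to two shards, so a single Byzantine peer in an overlap is charged against the $f$-budget of both of its shards simultaneously. I would therefore be careful to state the hypotheses as a conjunction --- at most $f$ faults per shard \emph{and} at most $\lfloor x/2 - 1\rfloor$ faults in each overlap --- and to check that under this conjunction every step that invoked a \emph{Consensus} property or a correct majority of the overlap is justified, so that no lemma is ever applied outside the regime in which it was proved.
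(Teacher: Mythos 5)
Your proposal is correct and takes essentially the same approach as the paper: the paper's entire proof of Theorem~\ref{thrmSmartShards} is the one-line observation that it follows from Lemmas~\ref{lemLValidity}, \ref{lemLAgreement}, and~\ref{lemLLiveness}. Your additional bookkeeping --- that the per-shard bound $f$ is what legitimizes every \emph{Consensus} invocation in the lemmas, and that an overlap fault count $b \le \lfloor x/2 - 1\rfloor$ simultaneously keeps the correct overlap peers above the \textsc{Transfer} threshold (liveness) and the Byzantine overlap peers below it (validity) --- is precisely the verification the paper leaves implicit, and your arithmetic for both parities of $x$ checks out.
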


\noindent
\textbf{Fault tolerance threshold estimation.}
Let us address the relation between $x$ and $f$. From the above theorem it follows that  $f < x/2$. However, we may be able to get a more optimistic bound. In classic sharding algorithms, it is assumed that faults are distributed across shards somewhat evenly. To put another way, the number of faults in a shard is proportional to shard size. We apply this reasoning to the shard overlap size. 

Let $s$ be the number of shards in the network and $z$ be the shard size. Assume the shard fault tolerance threshold is $f < z/3$. Since every pair of shards overlap over $x$ peers, the shard size is $z = x(s-1)$. If the number of faults is proportional to overlap size then the tolerance threshold can be estimated as follows:
\[
f < \frac{z}{3} = \frac{x(s-1)}{3}.
\]

\noindent
\textbf{Churn tolerance.} SmartShards may be modified to tolerate churn: peer joining and leaving. 
There are two aspects: (i) the client needs to ascertain that the appropriate member peers confirm the transaction and (ii) the peers themselves need to agree on the current shard membership.

We assume the existence of the Shard Membership Service (SMS) -- a trusted service providing membership information to the clients.  A client submits its transaction to the SMS. The SMS forwards this transaction to the peers in the source shard and replies to the client with the current set of peer identifiers in this shard. Once the peers receive the transaction from the SMS, they run the consensus algorithm and send the confirmation to the client. 

A client may re-send the transaction. In this case, the SMS behaves similarly: it forwards the transaction to the peers and sends the membership set to the client. Due to peer churn, the membership set may differ every time. If the peers receive a request for a transaction that is already confirmed, they re-send the confirmation to the client. 

Potentially, the SMS may return the exact set of peers that handle the transaction. However, such service may be difficult to implement as it needs to synchronize its operation with peers joining and leaving.  Instead, we consider a weaker SMS assumption. If the client submits infinitely many requests for a specific valid transaction, then the SMS provides infinitely many sets containing at least $f+1$ peers that reply to the transaction request. \emph{Weak SMS} is the SMS that satisfies this assumption. Henceforth, we refer to weak SMS as just SMS.  

The client actions are shown in Algorithm~\ref{algClient}. We show the code for a single transaction confirmation. Until confirmed, the client periodically sends the transaction requests to the SMS. The client collects membership sets sent back by the SMS in $SMSmembers$. The client also collects peer confirmations in $Confirmations$. The client considers the transaction confirmed if it gets at least $f+1$ confirmations for this transaction from one of the membership sets received from the SMS. 

Let us now discuss churn. A peer wishing to join or leave \emph{SmartShards} submits a transaction with this request. A joining peer utilizes the SMS. The process is similar to a client submitting a regular UTXO-related transaction. With their confirmation, the peers in the shard send their complete ledger to the joining peer. Once this peer receives $f+1$ confirmations from one of the shard membership sets supplied by the SMS, the peer is ready to participate in \emph{SmartShards}. The algorithm for a leaving peer is simpler since it knows the shard membership. The leaving peer just submits a leave request to the members of its shard. It may leave the shard once it receives $f+1$ confirmations. After the peer leaves, it may not receive or send any messages. 
We summarize this discussion in the below theorem.

\begin{theorem} \label{thrmSmartShardswChurn}
Algorithm \emph{SmartShards} with weak SMS solves the Distributed Ledger Problem with at most $f$ Byzantine faults in each shard and at most $\lfloor x/2 -1 \rfloor$ faults in the overlap between shards despite peer churn.
\end{theorem}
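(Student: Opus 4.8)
The plan is to establish Theorem~\ref{thrmSmartShardswChurn} by reducing the churn-tolerant case to the static case already handled by Theorem~\ref{thrmSmartShards}. The strategy is to show that each of the three ledger properties (LValidity, LAgreement, LLiveness) continues to hold once the weak SMS and the join/leave protocol are introduced, and that churn does not break any invariant on which Lemmas~\ref{lemLValidity},~\ref{lemLAgreement}, and~\ref{lemLLiveness} rely. The key observation is that a joining peer receives the entire ledger from $f+1$ confirming peers, at least one of which must be correct, and so inherits a correct ledger prefix; a leaving peer simply ceases to participate after collecting $f+1$ leave confirmations. Thus the set of correct peers in any shard always shares a consistent ledger state.

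First, I would argue LValidity is preserved. Confirmation still happens only through \textsc{EndConsensusTx} or \textsc{EndConsensusTr}, and both actions check validity before appending to the ledger exactly as in the static case. Churn changes which peers execute these actions but not the validity test itself, so the argument of Lemma~\ref{lemLValidity} carries over verbatim: a joining peer adopting the ledger from a correct confirming source starts from a valid prefix, and no invalid transaction can be confirmed thereafter. Second, for LAgreement, I would extend the induction of Lemma~\ref{lemLAgreement} to account for membership change. The critical step is showing that a newly joined correct peer's adopted ledger agrees with the ledger of the incumbent correct peers: since the peer waits for $f+1$ confirmations from an SMS-supplied membership set and at most $f$ of those are faulty, at least one correct peer supplies the ledger, and by the inductive hypothesis all correct peers already agree, so the adopted copy is the common correct ledger. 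From that point the \emph{Consensus}-based induction on transaction $i{+}1$ proceeds as before.

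Third, LLiveness is where the weak SMS assumption does the real work. I would invoke the defining property of weak SMS: if the client submits infinitely many requests for a valid transaction, the SMS returns infinitely many membership sets each containing at least $f+1$ peers that reply. Since each such set has a majority of correct peers, these correct peers run \emph{Consensus}, confirm by CLiveness, and send confirmations to the client; the client therefore eventually collects $f+1$ confirmations from a single membership set and declares the transaction confirmed. The same mechanism certifies join and leave requests, which are processed as ordinary transactions, so churn operations themselves terminate.

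The main obstacle I expect is the LAgreement argument across the join event, specifically ensuring that the ledger a correct joining peer adopts is identical to, and not merely a valid prefix of, the ledger held by incumbent correct peers at the moment of admission. Because transactions continue to be confirmed concurrently with the join, I would need to pin down a clean point of linearization: the joining peer adopts the ledger as it stood when the $f+1$-th confirmation was issued, and any transaction confirmed after that point is delivered to the new peer through the normal \textsc{EndConsensusTx}/\textsc{EndConsensusTr} flow, which by COrder and CAgreement appends identically at all correct peers including the newcomer. Making this hand-off precise, so that no transaction is lost or duplicated in the gap between snapshot and live participation, is the delicate part; the remaining two properties follow by essentially mechanical extensions of the corresponding static lemmas.
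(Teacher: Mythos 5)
Your overall strategy matches the paper's, to the extent the paper has one: the paper gives no formal proof of Theorem~\ref{thrmSmartShardswChurn} at all, stating it only as a ``summary'' of the informal discussion of the weak SMS, the client's $f+1$-confirmation rule, and the join/leave protocol. Your decomposition into re-verifying LValidity, LAgreement, and LLiveness under churn, with the weak SMS assumption carrying LLiveness and a reduction to Theorem~\ref{thrmSmartShards} for the static machinery, is exactly the formalization that discussion suggests, and your LLiveness paragraph is essentially the paper's argument made explicit. One nit there: the weak SMS guarantees only that each returned set contains at least $f+1$ peers that \emph{reply}; it does not guarantee that a majority of the set is correct, as you assert. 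What you need, and what suffices, is that the $f+1$ repliers eventually give the client more than $f$ confirmations from a single membership set, at least one of which comes from a correct peer, so that by LAgreement the transaction really is confirmed shard-wide.

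The genuine gap is the one you flag yourself but then paper over in your LAgreement paragraph: you write that since the joining peer waits for $f+1$ confirmations and at most $f$ senders are faulty, ``at least one correct peer supplies the ledger, \ldots so the adopted copy is the common correct ledger.'' That inference fails. The joiner receives $f+1$ ledger copies, of which up to $f$ may be arbitrary forgeries from Byzantine peers; knowing that at least one copy is correct does not tell the joiner \emph{which} one. With only $f+1$ responses there is no usable majority ($f$ bad versus $1$ good is possible, and the $f$ bad copies may even be mutually identical), so the peer cannot identify the correct ledger. A repair would require $f+1$ \emph{matching} copies from one membership set (any $f+1$ matching copies must include a correct one, and correct copies agree by LAgreement) --- which in the worst case needs $2f+1$ responses or repeated querying --- or else the ledger hand-off must itself be run through the shard's consensus. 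Combined with the snapshot/linearization issue you correctly identify (transactions confirmed concurrently with the join must be neither lost nor duplicated at the newcomer), this is the part of the theorem that neither your proposal nor the paper actually proves; the paper is simply silent on it, so your attempt is no weaker than the published argument, but it is not yet a complete proof.
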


\begin{figure}
\captionsetup{width=0.47\textwidth}
\begin{tabular}{c c}
    \centering
    \begin{minipage}[t]{0.5\textwidth}
        \includegraphics[width=1\textwidth]{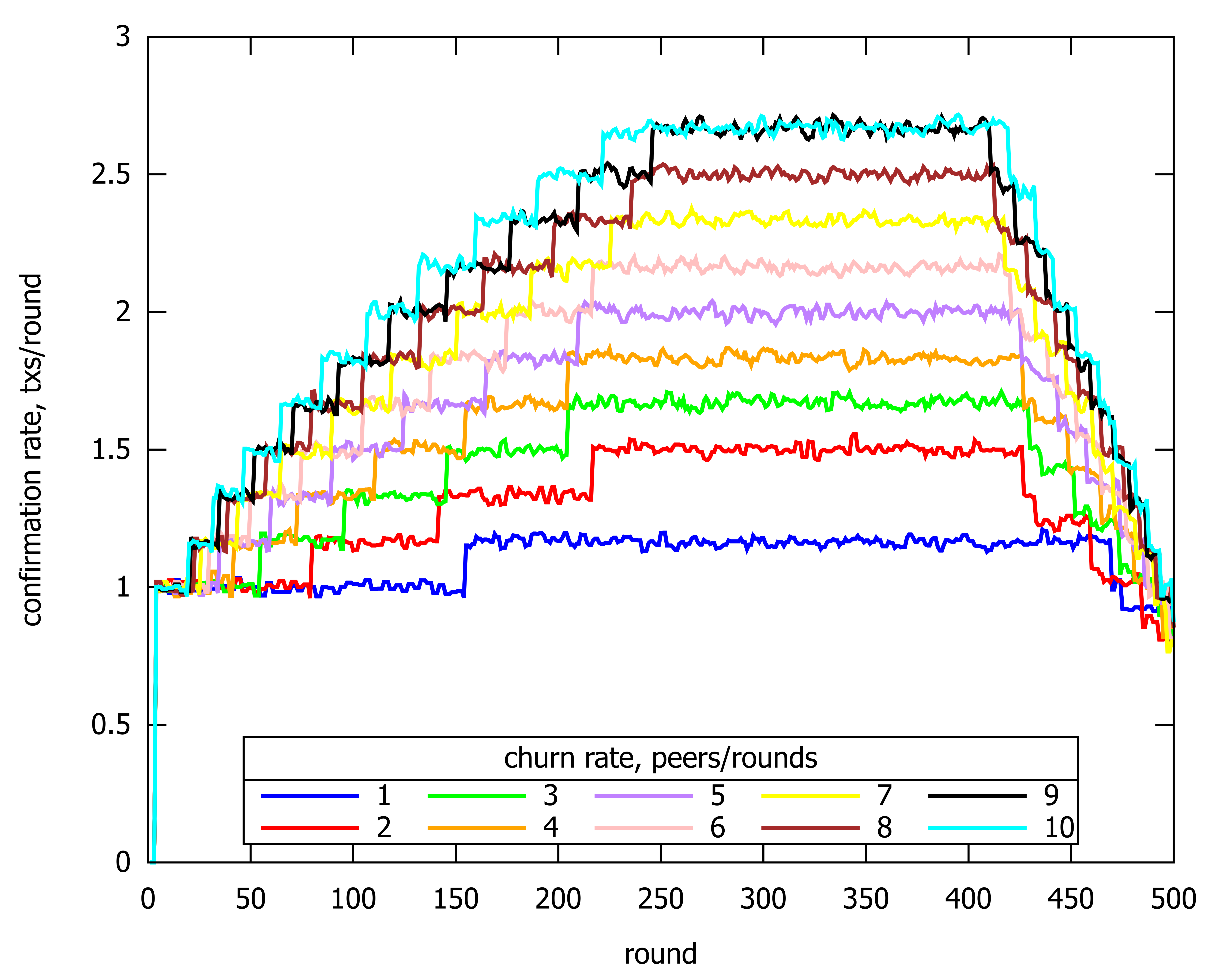}
        \caption{Transaction confirmation rate timing diagram. Peers request to join the network every round from $1$ to $250$. They request to leave the network every round from $251$ to $500$.}
        \label{timing-diagram}
    \end{minipage}
    &
    \begin{minipage}[t]{0.5\textwidth}
        \centering
        \includegraphics[width=1\textwidth]{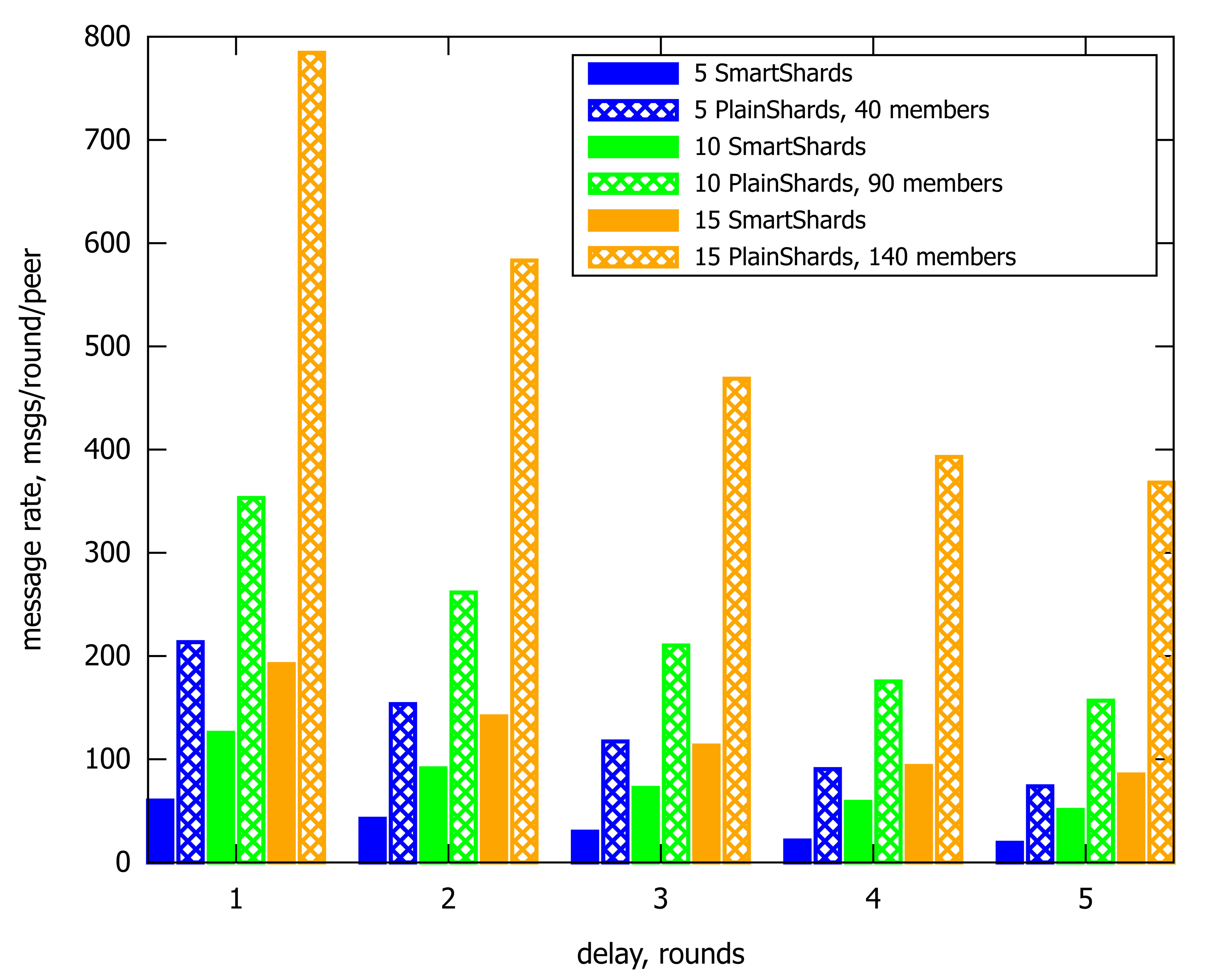}
        \caption{Message rates for varied shard sizes and max message delay for the \emph{SmartShards} algorithm vs. \emph{PlainShards}.}
        \label{smart-vs-notsmart}
    \end{minipage}
\end{tabular}
\end{figure}

\section{Performance Evaluation}

\noindent\textbf{Simulation setup.}
We evaluate the performance of \emph{SmartShards} in the abstract algorithm simulator QUANTAS~\cite{quantas}. The QUANTAS code for the \emph{SmartShards} simulation as well as our performance evaluation data is available online~\cite{SmartShardsgithub,smartShardsData}.

The simulated network consists of individual peers. Each pair of peers communicate via a message-passing channel. The channels are FIFO and reliable. A single computation is modeled as a sequence of rounds. In each round, a peer receives messages that were sent to it in the previous rounds, performs local computation, and sends messages to the other peers. 

The receipt of each message may be delayed up to some maximum amount which various by experiment. The network is modeled with  $n$ peers divided into $s$ shards.

We do not explicitly simulate clients or faults. Each shard processes transactions independently. We evaluate \emph{SmartShards} where peers participate in multiple overlapping shards against \emph{PlainShards} where each peer is in at most one shard. In \emph{SmartShards} each peer is in at most two shards. If two shards overlap, $x$ is the number of peers in this overlap. Therefore, total number of peers in the network is subject to this formula:
\[n = \frac{s(s-1)x}{2}\]
In \emph{SmartShards}, each shard is running \emph{PBFT}.
\emph{PBFT}~\cite{pbft} is a common leader-based consensus algorithm. Transactions are proposed by the \emph{PBFT} leader of each shard.  All transactions are valid and have a chance of being cross-shard that is governed by the number of shards in the network. 

Peers in the network may join and leave shards. Unless otherwise stated, in each round, a single peer requests to join and a single peer requests to leave the network. 

\emph{SmartShards} and \emph{PlainShards} differ in their implementation of cross-shard transactions and transaction membership. Since there is no shard overlap, in case of a cross-shard transaction, each source shard peer informs every target peer of source shard consensus decision. Also, each peer has to maintain memberships of all network shards. Therefore, once a peer joins or leaves the network, it has to inform all network peers.

We simulate computations of $500$ rounds. Unless otherwise stated, each data point is an average of $10$ tests.

\ \\
\noindent\textbf{Experiment description.} In the first experiment we observed the dynamics of \emph{SmartShards} churn handling. The results of a representative computations are shown in Figure~\ref{timing-diagram}. In every round from $1$ to $250$, new peers request to join the network. Then,  in rounds $251$ though $500$, randomly selected peers request to leave the network. \emph{PBFT} leaders and peers that have not joined yet, are not selected to leave. If there are least $25$ peers in the shard overlap between each shard on average, a new shard is created, random peers from donor shards join a new shard.  If there are less than $9$ peers in the shard overlap between each shard on average, a shard is destroyed and its peers are randomly distributed to other shards. We plot a rolling average of transaction confirmation rate over $100$ tests with a window of $3$ rounds.  We vary the churn rate: the number of peers requesting to join or leave the network every round. 

Initially, the network contains $100$ peers divided into $5$ shards with an overlap of $10$. The confirmation rate reflects the number of available shards: the greater the number of shards the higher the confirmation rate. The experiments show the robustness of \emph{SmartShards} with the respect to churn.

Figure~\ref{smart-vs-notsmart} compares \emph{SmartShards} with \emph{PlainShards}, the non-overlapping sharded implementation. We compare the message exchange rate for \emph{SmartShards} and \emph{PlainShards} networks of comparable sizes. The message rate is the total number of messages exchanged  divided by the number of rounds in the computation and the number of peers in the network. The figure indicates that \emph{SmartShards} outperforms \emph{PlainShards} due to more efficient implementation of cross-shard transaction and churn handling. Each \emph{SmartShards} experiment uses an overlap of $10$.

For the next set of experiments, we vary message delay and observe its influence on network performance. In Figures~\ref{shard-sizes-throughput},~\ref{shard-sizes-latency}, and~\ref{shard-sizes-messages}, we change the number of shards in \emph{SmartShards} and observe the dynamics of throughput, latency and message rate respectively. Throughput is computed as the total number of transactions confirmed during the computation. Latency is the number of rounds that elapse from when a transaction is submitted until it is confirmed by \emph{PBFT}. The figures indicate that with the increase of the number of shards the performance of \emph{SmartShards} improves with a relatively modest increase in message expense.  

In Figures~\ref{intersection-sizes-throughput},~\ref{intersection-sizes-latency}, and~\ref{intersection-sizes-messages} we change the intersection size and observe its influence on throughput, latency and message rate. Figures~\ref{intersection-sizes-throughput} and~\ref{intersection-sizes-latency} demonstrate that as the number of intersections between shards in the network increases, the throughput  and latency remain unaffected. However, Figure~\ref{intersection-sizes-messages} shows that the increase in the number of intersections increases the message rate as expected due to the increase in shard sizes. This cost is relatively low due to the benefit of increasing the Byzantine tolerance threshold of the algorithm.

\begin{figure}
    \captionsetup{width=0.49\textwidth}
\begin{tabular}{c c}
    \begin{minipage}[t]{0.5\textwidth}
        \includegraphics[width=\textwidth]{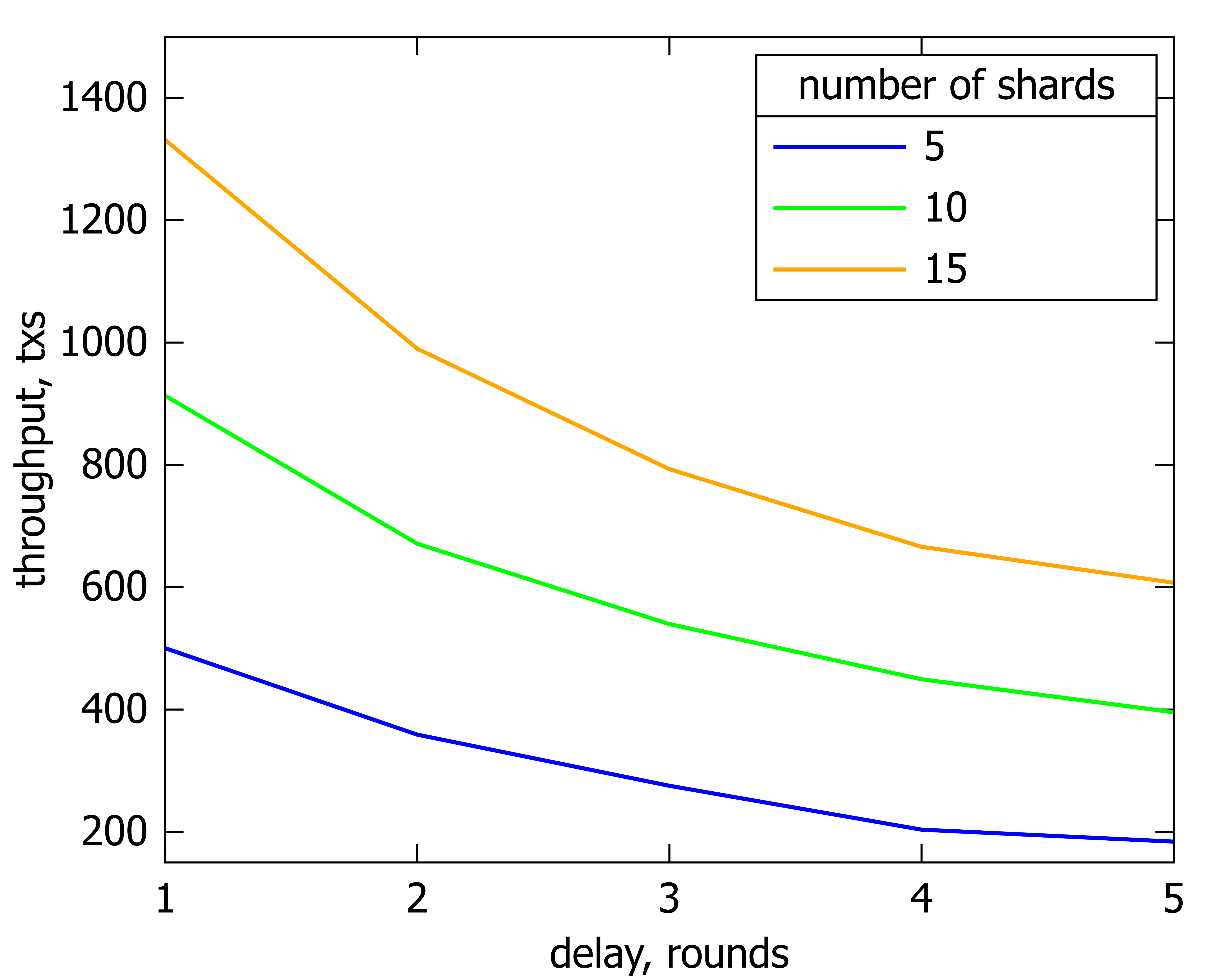}
        \caption{Number of approved transactions depending on the message delay for various number of shards.}
        \label{shard-sizes-throughput}
\end{minipage}
&
\begin{minipage}[t]{0.5\textwidth}
        \includegraphics[width=\textwidth]{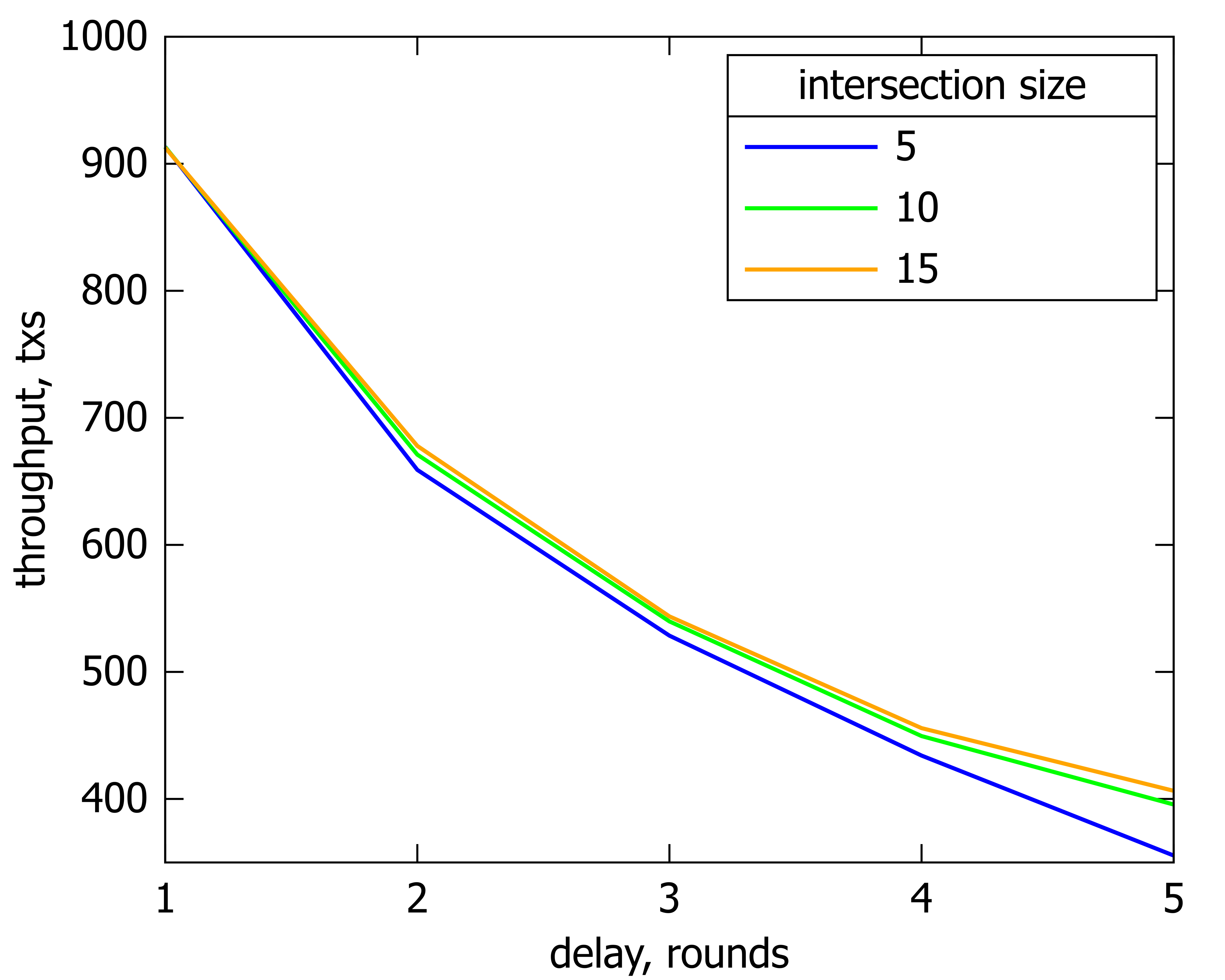}
        \caption{Number of approved transactions depending on the message delay for various intersection sizes.}
        \label{intersection-sizes-throughput}
\end{minipage}

\\

\begin{minipage}[t]{0.5\textwidth}
        \includegraphics[width=\textwidth]{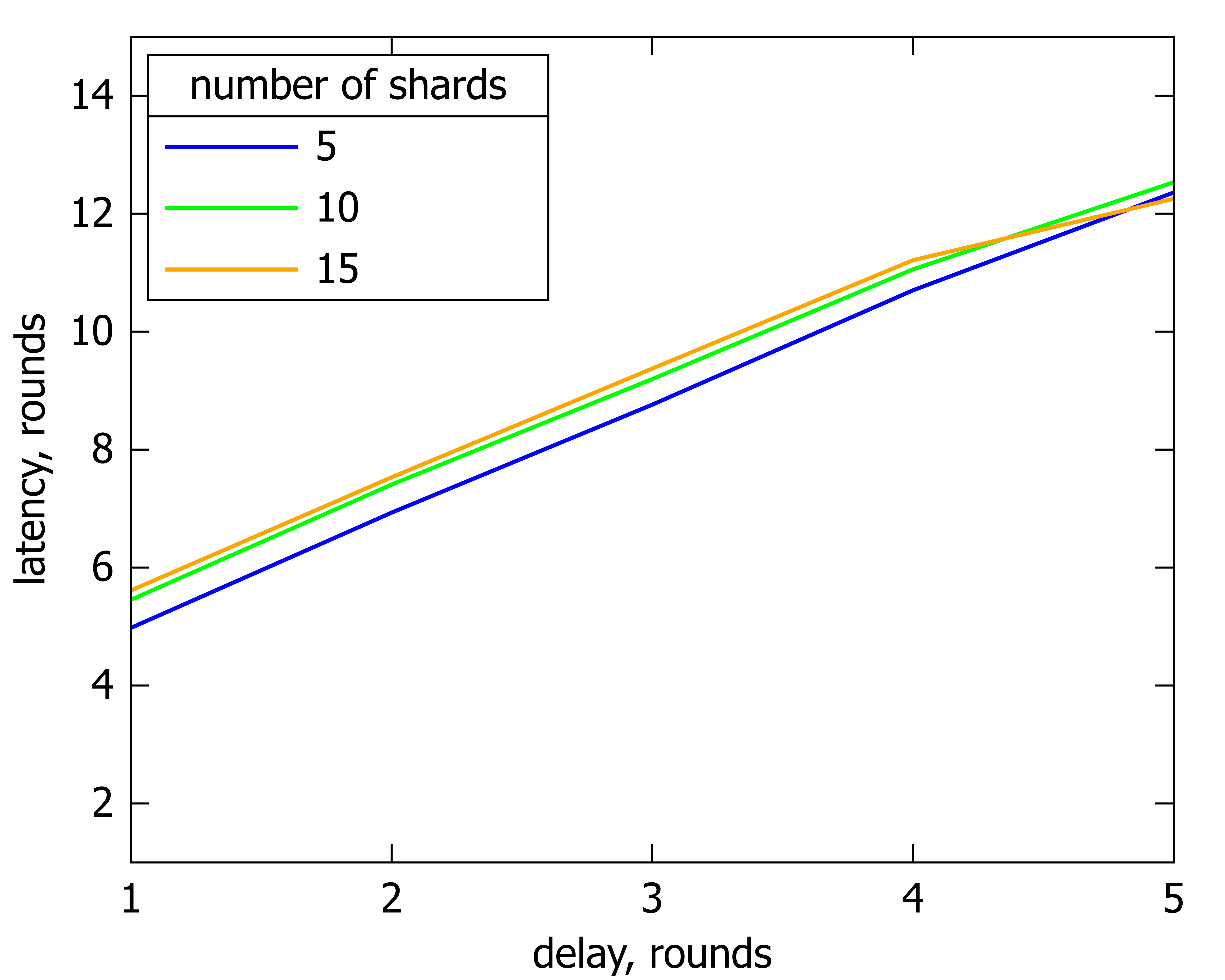}
        \caption{Latency of approving a transaction depending on the message delay for various number of shards.}
        \label{shard-sizes-latency}
\end{minipage}
 &
\begin{minipage}[t]{0.5\textwidth}
       \includegraphics[width=\textwidth]{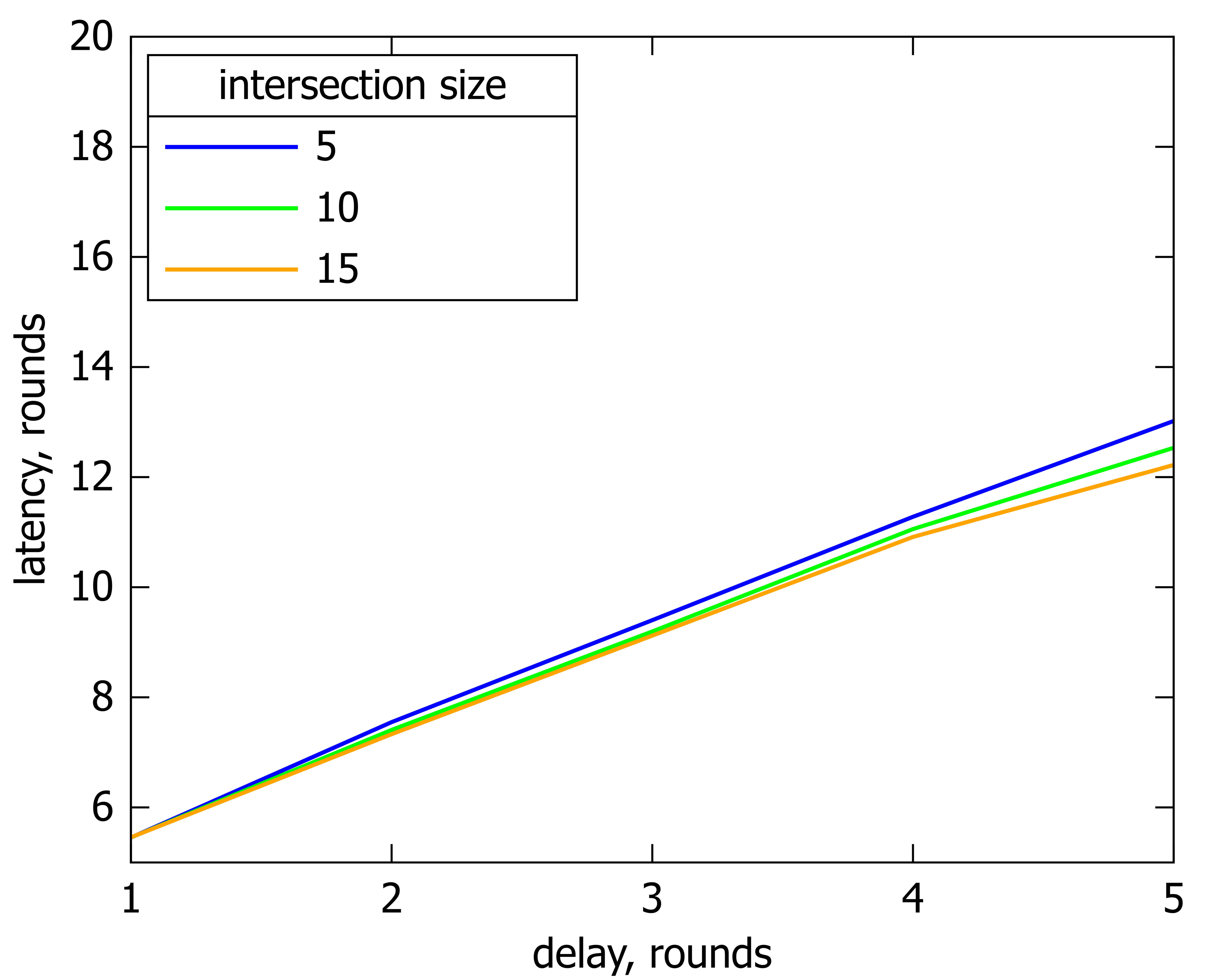}
        \caption{Latency of approving a transaction depending on the message delay for various intersection sizes.}
        \label{intersection-sizes-latency}
\end{minipage}

\\

\begin{minipage}[t]{0.5\textwidth}
        \includegraphics[width=\textwidth]{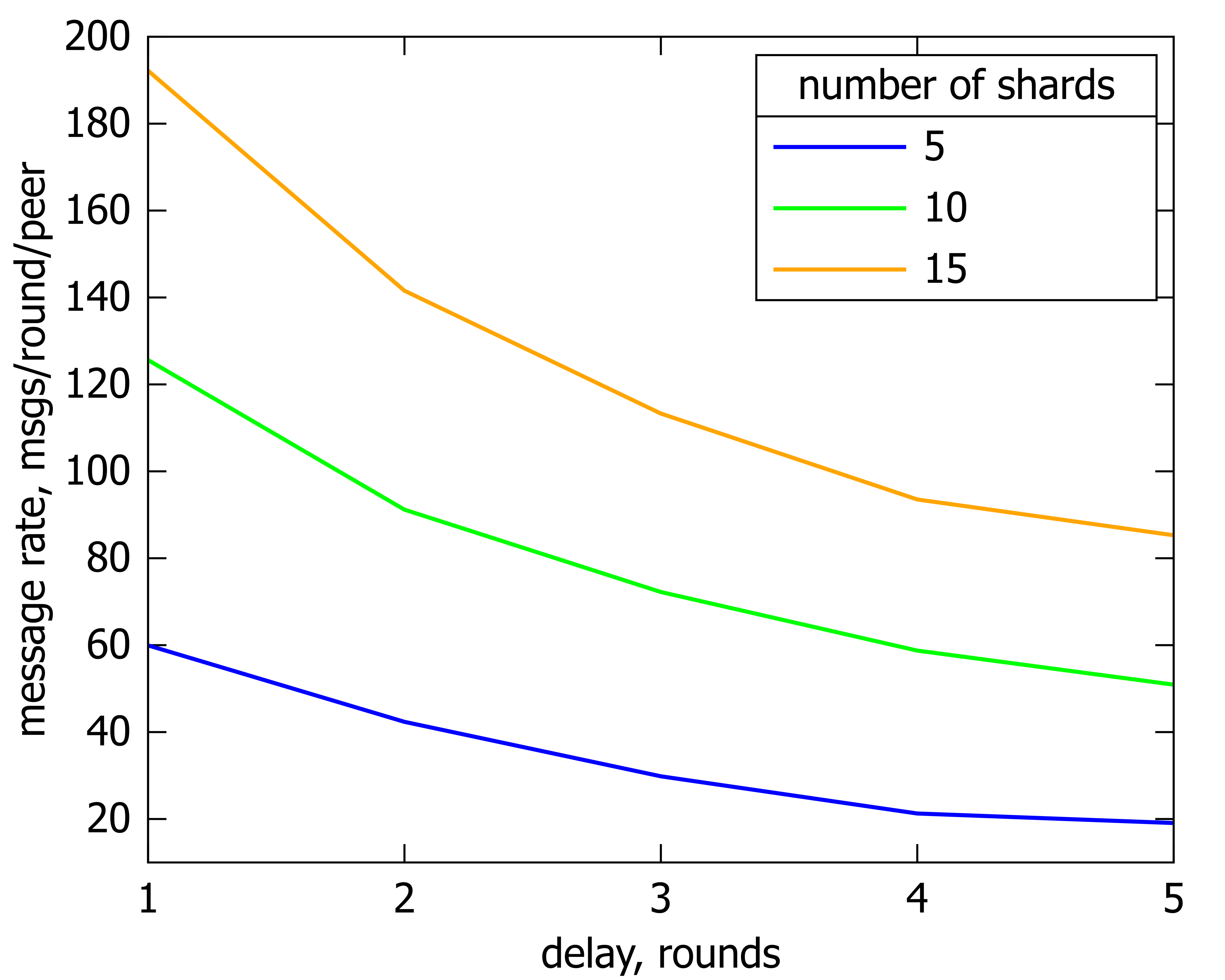}
        \caption{Message exchange rate depending on message delay.}
        \label{shard-sizes-messages}
\end{minipage}
&
\begin{minipage}[t]{0.5\textwidth}
       \includegraphics[width=\textwidth]{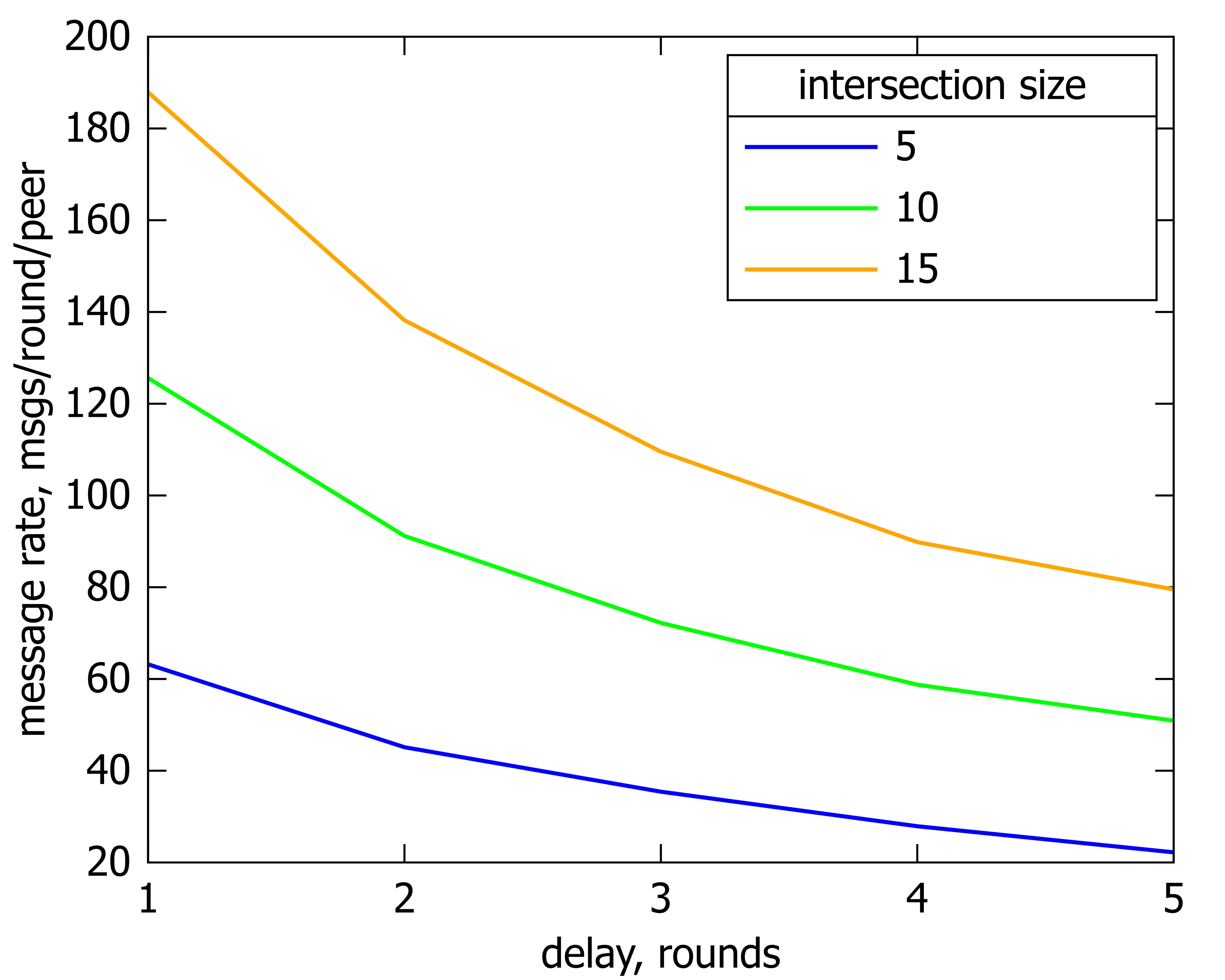}
        \caption{Average message rate for varied intersection sizes and max message delay.}
        \label{intersection-sizes-messages}
\end{minipage}
    
 \end{tabular}
 \vspace{-8mm}
 \end{figure}

\vspace{-.4cm}
\section{Extensions and Implementation Considerations}

\vspace{-.2cm}
\textbf{Multiple sources and targets transactions.} \emph{SmartShards} can be adapted to handle a \emph{multi-transaction} that transfers funds from multiple source wallets or to multiple target wallets atomically. If the wallets belong to the same shard, the modification is simple since the peers of the same shard confirm the transaction. 
 However, it becomes complicated if the wallets are located in different shards.

In this case, the multi-transaction is preceded by the Consolidation Phase and followed by the Distribution Phase. The actual transaction is carried out in a \emph{consolidation shard}. This shard may be one of the source or target shards or an unrelated shard.
The \emph{Consolidation Phase} consists of moving the source UTXOs into the consolidation shard. Each source UTXO is transferred to the consolidation shard without changing ownership of the UTXO. That is, the same client owns the wallet which contained the UTXO in the source as in the consolidation shard. After the Consolidation Phase is complete, \emph{SmartShards} confirms the multi-transaction in a single consolidation shard. After this execution, the target UTXOs are transferred to the target shards during the \emph{Distribution Phase}. Again, there is no ownership change during such a transfer. 

\ \\
\textbf{Defense against membership attacks.} \emph{SmartShards} may be fortified against a \emph{slowly adaptive adversary}~\cite{zamani2018rapidchain} that requires a certain time period to corrupt new peers. Such an adversary may focus on compromising a single shard or a single shard intersection. If the corruption period is known, peers are ejected from their shards before a timer expires to prevent a Byzantine majority from compromising the system. Such peers then join a different shard shuffling the peers and distributing the faults.

Similarly, defense against a leave/join attack may be incorporated into \emph{SmartShards}. In such an attack, the faulty nodes repeatedly join and leave the network hoping to get into the same shard in an attempt to exceed its tolerance threshold. To counteract it, a Cuckoo rule may be applied~\cite{awerbuchDHT}. If a node joins a shard, another arbitrary node has to leave this shard and join another shard. Thus, the adversary loses control over fault node shard selection. 

\ \\
\textbf{Implementation considerations.} As with any blockchain, to improve performance, multiple transactions may be confirmed in a block in \emph{SmartShards}. Thus, a single round of agreement is required for confirmation of all transactions in the block. In this case, the peers of the shard confirm several transactions as a block and then carry out the transfer part of each transaction individually. The block may contain regular UTXO transactions as well as peer join and leave transactions. 

In \emph{SmartShards}, peers may belong to more than two shards. This simplifies inter-shard communication and may improve the system overall performance. However, the performance gain may be limited since each peer has to participate in consensus of every shard that it belongs to.

Let us comment on the implementation of the Shard Membership Service (SMS) component of \emph{SmartShards}, which clients use to keep track of membership changes. Effectively, the SMS is a trusted source of membership information. As such, it may be implemented as a side-chain of membership records, a bootstrap service, or a Byzantine-robust host that reliably answers client membership queries.

\vspace{-0.4cm}
\section{Concluding Remarks}

\vspace{-.2cm}
In this paper we explored the idea of overlapping shards. We showed how it can be used to improve efficiency and robustness of a sharded blockchain. In the future, it would be interesting to implement \emph{SmartShards} in a real pear-to-peer system and compare it against traditional non-overlapping sharding blockchains. Alternatively, \emph{SmartShards} may be used directly in existing blockchains to overlap shards and enhance their fault-tolerance and functional performance.

\newpage

\bibliographystyle{plain}
\bibliography{smartShards}

\begin{thebibliography}{10}

\bibitem{SmartShardsgithub}
Smartshards implementation.
\newblock \url{https://github.com/QuantasSupport/Quantas/tree/SmartShards}.

\bibitem{smartShardsData}
Smartshards perfromance evaluation data.
\newblock \url{http://www.cs.kent.edu/~mikhail/Research/trail.output.tar.gz},
  February 2025.

\bibitem{agbo2019blockchain}
Cornelius~C Agbo, Qusay~H Mahmoud, and J~Mikael Eklund.
\newblock Blockchain technology in healthcare: a systematic review.
\newblock In {\em Healthcare}, volume~7, page~56. MDPI, 2019.

\bibitem{andoni2019blockchain}
Merlinda Andoni, Valentin Robu, David Flynn, Simone Abram, Dale Geach, David
  Jenkins, Peter McCallum, and Andrew Peacock.
\newblock Blockchain technology in the energy sector: A systematic review of
  challenges and opportunities.
\newblock {\em Renewable and sustainable energy reviews}, 100:143--174, 2019.

\bibitem{androulaki2018hyperledger}
Elli Androulaki, Artem Barger, Vita Bortnikov, Christian Cachin, Konstantinos
  Christidis, Angelo De~Caro, David Enyeart, Christopher Ferris, Gennady
  Laventman, Yacov Manevich, et~al.
\newblock Hyperledger fabric: a distributed operating system for permissioned
  blockchains.
\newblock In {\em Proceedings of the thirteenth EuroSys conference}, pages
  1--15, 2018.

\bibitem{awerbuchDHT}
Baruch Awerbuch and Christian Scheideler.
\newblock Towards a scalable and robust dht.
\newblock In {\em Proceedings of the eighteenth annual ACM symposium on
  Parallelism in algorithms and architectures}, pages 318--327, 2006.

\bibitem{baldoni2013protocol}
Roberto Baldoni, Silvia Bonomi, and Amir~Soltani Nezhad.
\newblock A protocol for implementing byzantine storage in churn-prone
  distributed systems.
\newblock {\em Theoretical Computer Science}, 512:28--40, 2013.

\bibitem{bessani2020byzantine}
Alysson Bessani, Eduardo Alchieri, Jo{\~a}o Sousa, Andr{\'e} Oliveira, and
  Fernando Pedone.
\newblock From byzantine replication to blockchain: Consensus is only the
  beginning.
\newblock In {\em 2020 50th Annual IEEE/IFIP International Conference on
  Dependable Systems and Networks (DSN)}, pages 424--436. IEEE, 2020.

\bibitem{pbft}
Miguel Castro and Barbara Liskov.
\newblock Practical byzantine fault tolerance and proactive recovery.
\newblock {\em ACM Trans. Comput. Syst.}, 20(4):398--461, November 2002.

\bibitem{dujak2019blockchain}
Davor Dujak and Domagoj Sajter.
\newblock Blockchain applications in supply chain.
\newblock {\em SMART supply network}, pages 21--46, 2019.

\bibitem{foreback2016infinite}
Dianne Foreback, Mikhail Nesterenko, and S{\'e}bastien Tixeuil.
\newblock Infinite unlimited churn.
\newblock {\em arXiv preprint arXiv:1608.00726}, 2016.

\bibitem{foreback2019churn}
Dianne Foreback, Mikhail Nesterenko, and S{\'e}bastien Tixeuil.
\newblock Churn possibilities and impossibilities.
\newblock In {\em Networked Systems: 6th International Conference, NETYS 2018,
  Essaouira, Morocco, May 9--11, 2018, Revised Selected Papers 6}, pages
  303--317. Springer, 2019.

\bibitem{garcia2019federated}
{\'A}lvaro Garc{\'\i}a-P{\'e}rez and Alexey Gotsman.
\newblock Federated byzantine quorum systems.
\newblock In {\em 22nd International Conference on Principles of Distributed
  Systems (OPODIS 2018)}. Schloss-Dagstuhl-Leibniz Zentrum f{\"u}r Informatik,
  2019.

\bibitem{hafid2020scaling}
Abdelatif Hafid, Abdelhakim~Senhaji Hafid, and Mustapha Samih.
\newblock Scaling blockchains: A comprehensive survey.
\newblock {\em IEEE access}, 8:125244--125262, 2020.

\bibitem{ito2019critical}
Kensuke Ito and Marcus O’Dair.
\newblock A critical examination of the application of blockchain technology to
  intellectual property management.
\newblock {\em Business Transformation through Blockchain: Volume II}, pages
  317--335, 2019.

\bibitem{kokoris2018omniledger}
Eleftherios Kokoris-Kogias, Philipp Jovanovic, Linus Gasser, Nicolas Gailly,
  Ewa Syta, and Bryan Ford.
\newblock Omniledger: A secure, scale-out, decentralized ledger via sharding.
\newblock In {\em 2018 IEEE symposium on security and privacy (SP)}, pages
  583--598. IEEE, 2018.

\bibitem{kuhn2010towards}
Fabian Kuhn, Stefan Schmid, and Roger Wattenhofer.
\newblock Towards worst-case churn resistant peer-to-peer systems.
\newblock {\em Distributed Computing}, 22:249--267, 2010.

\bibitem{lafourcade2020security}
Pascal Lafourcade, Mike Nopere, J{\'e}r{\'e}my Picot, Daniela Pizzuti, and
  Etienne Roudeix.
\newblock Security analysis of auctionity: a blockchain based e-auction.
\newblock In {\em Foundations and Practice of Security: 12th International
  Symposium, FPS 2019, Toulouse, France, November 5--7, 2019, Revised Selected
  Papers 12}, pages 290--307. Springer, 2020.

\bibitem{lamport1982byzantine}
Leslie Lamport, Robert Shostak, and Marshall Pease.
\newblock The byzantine generals problem.
\newblock {\em ACM Transactions on Programming Languages and Systems},
  4(3):382--401, 1982.

\bibitem{liu2023survey}
Xinmeng Liu, Haomeng Xie, Zheng Yan, and Xueqin Liang.
\newblock A survey on blockchain sharding.
\newblock {\em ISA transactions}, 141:30--43, 2023.

\bibitem{luu2016secureelastico}
Loi Luu, Viswesh Narayanan, Chaodong Zheng, Kunal Baweja, Seth Gilbert, and
  Prateek Saxena.
\newblock A secure sharding protocol for open blockchains.
\newblock In {\em Proceedings of the 2016 ACM SIGSAC conference on computer and
  communications security}, pages 17--30, 2016.

\bibitem{malkhi1998byzantine}
Dahlia Malkhi and Michael Reiter.
\newblock Byzantine quorum systems.
\newblock {\em Distributed computing}, 11(4):203--213, 1998.

\bibitem{bitcoin}
Satoshi Nakamoto.
\newblock Bitcoin: A peer-to-peer electronic cash system.
\newblock 2008.

\bibitem{quantas}
Joseph Oglio, Kendric Hood, Mikhail Nesterenko, and Sebastien Tixeuil.
\newblock Quantas: quantitative user-friendly adaptable networked things
  abstract simulator.
\newblock In {\em Proceedings of the 2022 Workshop on Advanced tools,
  programming languages, and PLatforms for Implementing and Evaluating
  algorithms for Distributed systems}, pages 40--46, 2022.

\bibitem{samaniego2016blockchain}
Mayra Samaniego, Uurtsaikh Jamsrandorj, and Ralph Deters.
\newblock Blockchain as a service for iot.
\newblock In {\em 2016 IEEE international conference on internet of things
  (iThings) and IEEE green computing and communications (GreenCom) and IEEE
  cyber, physical and social computing (CPSCom) and IEEE smart data
  (SmartData)}, pages 433--436. IEEE, 2016.

\bibitem{wang2019monoxide}
Jiaping Wang and Hao Wang.
\newblock Monoxide: Scale out blockchains with asynchronous consensus zones.
\newblock In {\em 16th USENIX symposium on networked systems design and
  implementation (NSDI 19)}, pages 95--112, 2019.

\bibitem{wendl2023environmental}
Moritz Wendl, My~Hanh Doan, and Remmer Sassen.
\newblock The environmental impact of cryptocurrencies using proof of work and
  proof of stake consensus algorithms: A systematic review.
\newblock {\em Journal of Environmental Management}, 326:116530, 2023.

\bibitem{white2022characterizing}
Bryan White, Aniket Mahanti, and Kalpdrum Passi.
\newblock Characterizing the opensea nft marketplace.
\newblock In {\em Companion Proceedings of the Web Conference 2022}, pages
  488--496, 2022.

\bibitem{ethereum}
Gavin Wood.
\newblock Ethereum: A secure decentralized generalized transaction ledger.
\newblock {\em Ethereum project yellow paper}, 151:1--32, 2014.

\bibitem{yu2020survey}
Guangsheng Yu, Xu~Wang, Kan Yu, Wei Ni, J~Andrew Zhang, and Ren~Ping Liu.
\newblock Survey: Sharding in blockchains.
\newblock {\em IEEE Access}, 8:14155--14181, 2020.

\bibitem{zamani2018rapidchain}
Mahdi Zamani, Mahnush Movahedi, and Mariana Raykova.
\newblock Rapidchain: Scaling blockchain via full sharding.
\newblock In {\em Proceedings of the 2018 ACM SIGSAC conference on computer and
  communications security}, pages 931--948, 2018.

\end{thebibliography}
\end{document}